\def\current@color{Black}
\tikzstyle{int}=[draw, fill=white!20, minimum size=2em]
\tikzstyle{init} = [pin edge={to-,thin,black}]
\newtheorem{thm}{Theorem}
\newtheorem{lemma}{Lemma}
\newcommand\blfootnote[1]{%
	\begingroup
	\renewcommand\thefootnote{}\footnote{#1}%
	\addtocounter{footnote}{-1}%
	\endgroup
}
\newcommand{\algrule}[1][.2pt]{\par\vskip.5\baselineskip\hrule height #1\par\vskip.5\baselineskip}
\def\munderbar#1{\underline{\sbox\tw@{$#1$}\dp\tw@\z@\box\tw@}}
\acrodef{AoI}{Age of Information}
\acrodef{IoT}{Internet of Things}
\acrodef{DPP}{Drift-Plus-Penalty}
\acrodef{DTMC}{Discrete-Time Markov Chain}
\acrodef{w.p.}{with probability}
\acrodef{EVD}{EigenValue-Decomposition}
\acrodef{MDP}{Markov Decision Process}
\acrodef{CSI}{Channel State Information}
\acrodef{CMDP}{Constrained Markov Decision Process}
\acrodef{i.i.d}{independent and identically distributed}
\acrodef{ACK}{acknowledgement}
\acrodef{RVIA}{Relative Value Iteration Algorithm}
\acrodef{VI}{Value Iteration}
\acrodef{w.r.t}{with respect to}
\begin{document}
\title{
  Goal-oriented Policies for Cost of Actuation Error Minimization in Wireless Autonomous Systems
}

\author{
   \IEEEauthorblockN{Emmanouil Fountoulakis, Nikolaos Pappas, and Marios Kountouris}
}
\maketitle

\immediate\write18{echo $PATH > tmp1}
\immediate\write18{/Library/TeX/texbin/epstopdf > tmp2} 

\maketitle
\thispagestyle{plain} 
\pagestyle{plain}
\begin{abstract}
    We consider the minimization of the cost of actuation error under resource constraints for real-time tracking in wireless autonomous systems. A transmitter monitors the state of a discrete random process and sends updates to a receiver over an unreliable wireless channel. The receiver then takes actions according to the estimated state of the source. For each discrepancy between the real state of the source and the estimated one, we consider a different cost of actuation error. This models the case where some states, and consequently the corresponding actions to be taken, are more important than others. We provide two algorithms, a first one reaching an optimal solution but of high complexity, and a second low-complexity one that provides a suboptimal solution. Our simulation results evince that the performance of the two algorithms are quite close.
\end{abstract}

\section{Introduction}
\blfootnote{
This work has been performed while E. Fountoulakis was with the Communication Systems Department, EURECOM, France. He is now with Ericsson, Sweden. Email: emmanouil.fountoulakis@ericsson.com. 
N. Pappas is with the Department of Computer and Information Science, Link\"oping University, Sweden. Email: nikolaos.pappas@liu.se. 
M. Kountouris is with the Communication Systems Department, EURECOM, France. Email: marios.kountouris@eurecom.fr. 

The work of E. Fountoulakis and M. Kountouris has received funding from the European Research Council (ERC) under the European Union’s Horizon 2020 research and innovation programme (Grant agreement No. 101003431). The work of N. Pappas has been supported in part by the Swedish Research Council (VR), ELLIIT, Zenith, and the European Union (ETHER, 101096526). 
}

Emerging cyber-physical and real-time autonomous systems are envisioned to introduce various applications and services, in which information distilled from measurements or observations is valuable when it is fresh, accurate, and useful to the specific goal of the data exchange. In this context, a relevant yet challenging problem is that of remote real-time tracking and actuation driven by sampled and potentially delayed measurements transmitted over a wireless channel using limited resources.
    
Conventional communication system design has mainly remained agnostic to the \textit{significance} of transmitted messages, in particular at the physical and medium access layers. The optimization of system performance has been dominated by metrics such as throughput, delay, and packet drop rate. Although these performance metrics have turned out to be instrumental for enabling reliable and efficient communication, they fall short of differentiating the packets according to their information content and its value. 
A recently developed metric, named \ac{AoI}, has been proposed to measure the freshness and the timeliness of information \cite{kaul2012real,kosta2017age, yates2021age}. However, baseline \ac{AoI}-based metrics do not take into account the source evolution and the significance of the generated information with respect to the communication task/goal and the context. Several variants of \ac{AoI} have been proposed for tackling the problem of remote estimation in status update systems \cite{huang2020real, sun2019sampling, tang2022sampling, zheng2020urgency}.
Nevertheless, the aforementioned works do not consider the \textit{cost of actuation error}, as they mainly focus on the discrepancy between the source and the estimated value of the process at the destination. A recently proposed approach, which is also adopted in this paper, takes into account the semantics of the information, i.e., significance, goal-oriented usefulness, and contextual importance of information as a means to leverage the synergy between data generation and processing, information transmission, and signal reconstruction \cite{kountouris2021semantics,DenizJSAC,stavrou2022rate,Kaibin21Sem}.

In this work, we consider the problem of real-time tracking and estimation of an information source from a remote actuator. A transmitter samples and sends information about the state of a source in the form of status update packets to a remote actuator (receiver) over an unreliable wireless channel. The actuator takes actions depending on the estimated state of the remote source. We also consider that the transmitter has limited resources, which prevents it from sampling and transmitting updates continuously. 
This paper extends the results of \cite{pappas2021goal, mehrdad2023}, where the problem of remote monitoring of a discrete Markov source is considered and semantics-empowered policies are proposed to significantly reduce both the real-time reconstruction and the cost of actuation errors, as well as the amount of ineffective updates.
Specifically, we consider a more general discrete stochastic source process and resource constraints, which make the solution essentially different. 
The problem is formulated as a \ac{CMDP}, and two \textit{goal-oriented semantic-aware} policies are proposed. A key takeaway is that it is optimal for the transmitter to remain silent even if there is a discrepancy between the actual state of the source and its estimate at the receiver, due to the delay induced by the wireless channel, the high transition probability, and the large actuation error. 

\section{System Model}
We consider a time slotted communication system in which a transmitter monitors a discrete random process and sends status updates to a receiver over an error-prone wireless channel. Let $t \in \mathbb{Z}_{> 0}$ denote the $t$-th slot. The receiver operates as a remote actuator and performs actions according to the estimated state of the source. The state of the process is modeled by a \ac{DTMC} $\left\{X_t\right\}_{t\in Z_{>0}}$, and is assumed to be ergodic. The state of the source takes values from the set $\left\{0,1,\ldots, N\right\}$, where $N \in \mathbb{Z}_{>0}$. Each state corresponds to a specific action that has to be performed by the actuator. 

The channel realization is denoted by $h_t$, and is equal to $1$ if a packet is successfully received at time slot $t$ and $0$ otherwise. The success probability is defined as $p_s = \Pr(h_t=1)$, and the failure probability as $p_f = \Pr(h_t=0) = 1 - p_s$. For every successful transmission, the receiver updates its information regarding the state of the source with a new estimate denoted by $\hat{X}_t$. The receiver sends an \ac{ACK}/negative \ac{ACK} for successful/failed transmissions. We assume that ACK/NACK information is sent and received instantaneously and error free. If the receiver does not successfully receive an update, it uses its previous estimate as the current one, i.e., $\hat{X}_{t+1} = \hat{X}_t$. We consider that the sampling and transmission processes take a time slot to be performed. Therefore, the receiver receives an update from the transmitter with one slot delay, if a transmission is successful at time slot $t$, and the actuator updates its state at slot $t+1$.


The transmitter generates a status update $X_t$ by sampling the source \textit{at will}. The decision to sample and transmit at time slot $t$ is denoted by $\alpha_t$, where
\begin{align}\label{decvar: alpha}
		\alpha_t =
	\begin{cases}
	 1\text{, if the source is sampled and its state transmitted,}\\
						0\text{, otherwise.}
	\end{cases}
\end{align}
\subsection{Performance metrics}
We consider that the actuator (receiver) takes actions according to the estimated state of the source. If the estimated state is different from the real state of the source, an actuation error occurs depending on a pre-defined function. The cost of actuation error captures the \textit{significance} (semantics) of the error at the point of actuation. Note that some errors may have a larger or a more critical impact than others. Let $C_{i,j}$ denote the cost of being in state $i$ at the source, and in state $j$, estimated at the receiver, at time slot $t$, i.e., $X_t = i$ and $\hat{X}_t = j$. We assume that the costs $C_{i,j}$ are given and remain the same over the time horizon.
Furthermore, for every sampling and transmission actions, we consider a cost $c$. This cost can represent, for instance, the power consumption for both sampling and transmission procedures.

\section{Problem Formulation}
The objective of this work is to minimize the average total cost of actuation error under average resource constraints. The expected time averages of the transmission and actuation costs are defined as
\begin{align}
	\bar{c} \triangleq \lim\limits_{T\rightarrow \infty} \frac{1}{T} \sum\limits_{t=1}^T \mathbb{E}\left\{ \alpha_t c\right\}\text{, } 
	\bar{C} \triangleq \lim\limits_{T\rightarrow \infty } \frac{1}{T} \sum\limits_{t=1}^T \mathbb{E} \left\{C_{i,j}^t\right\}\text{,}
\end{align}
respectively.
To this end, we formulate our stochastic optimization problem as
	\begin{align}\label{optproblem}
		\min\limits_{\pi} \quad \bar{C}^\pi 	\text{, s.~t.,} \quad   \bar{c}^\pi \leq c_\text{max}\text{,}
	\end{align}
where $\pi$ is the policy that decides the rule of selecting the right value $\alpha_t$ at every time slot $t$, and $c > 0$ is the time-averaged cost constraint. The problem in \eqref{optproblem} is a \ac{CMDP}. The system state is described by tuple $S_t = (X_t,\hat{X}_t, C_{t})$, actions $\alpha_t \in \mathcal{A}$, where $\mathcal{A} = \left\{0,1\right\}$, and the transition matrix is described by $P_{i,j} = \Pr\left\{X_{t+1}=j|X_{t}=i\right\}$. 
We assume that the transmitter has knowledge of the channel and source statistics.

\section{Proposed Algorithms}
In this section, we provide two optimization algorithms for solving problem \eqref{optproblem} optimally and suboptimally. 
\subsection{Optimal Solution}
The problem in \eqref{optproblem} is a \ac{CMDP}, which is, in general, difficult to be solved \cite{altman1999constrained}. In order to solve the constrained problem, we relax the constraints in \eqref{optproblem} by utilizing Lagrangian multipliers. We show that this approach can provide the optimal solution. 

We define the \textit{Lagrangian} function as 
\begin{align}\label{eq:LagrangianCostTrError}
	\mathcal{L}(\pi,\lambda) = \lim\limits_{T\rightarrow \infty}\frac{1}{T} \sum\limits_{t=1}^{T} \mathbb{E}_\pi \left\{C_{t} + \lambda \alpha_t c \right\} - \lambda c\text{,}
\end{align}
where the immediate cost is
$
	f(S_t) = C_{t} + \lambda\alpha_t c\text{.}
$
In order to proceed with the solution in \eqref{optproblem}, we consider the following optimization problem
\begin{align}\label{opt:dual}
	\min\limits_{\pi \in \Pi} \mathcal{L}(\pi,\lambda)\text{,}
\end{align}
for any given $\lambda\geq 0$\footnote{For $\lambda=0$, one may expect that the optimal policy is to always transmit because the sampling and the transmission processes are costless. However, our simulation results show that this is not always optimal even for cost-free transmissions.}. Since $\lambda c$ is independent of the chosen policy $\pi$, the problem in \eqref{opt:dual} is equivalent to the following optimization problem
\begin{align}\label{opt:dualWithoutc}
	\min\limits_{\pi \in \Pi} h(\lambda, \pi) = \min\limits_{\pi \in \Pi} \lim\sup\limits_{T\rightarrow \infty} \frac{1}{T} \mathbb{E}^\pi \left(\sum\limits_{t=0}^{T-1} C_{t} + \lambda\alpha_t c \right)\text{.}
\end{align} 
A policy that achieves $\mathcal{L}^*(\lambda)$ is called $\lambda$-optimal, denoted by $\pi_\lambda^{*}$, and is a solution to the following optimization problem
	$\min\limits_{\pi_\lambda} \mathcal{L}(\pi,\lambda)\text{.}$
Since the dimension of the state space $\mathcal{S}$ is finite, the \textit{growth condition} \cite[Eq. 11.21]{altman1999constrained} is satisfied. In addition, the immediate cost function is bounded below ($\geq 0$). Since these conditions are satisfied, the optimal value of the \ac{CMDP} problem in \eqref{optproblem},
$\bar{C}_\pi^{*}$, and the optimal value of the \eqref{opt:dual}, $\mathcal{L}^{*}(\lambda)$, ensure the following relation \cite[Corollary 12.2]{altman1999constrained}
\begin{align}
	\bar{C}_{\pi^*} = \sup\limits_{\lambda\geq 0} \mathcal{L}^{*}(\lambda)\text{.}
\end{align}

\begin{thm}[Mixture of two randomized policies]\label{thm:randomizedpolicies}\cite[Theorem 4.4]{beutler1985optimal}
 The optimal policy $\pi^{*}$ is a mixture of two deterministic policies $\pi^{*}_{\lambda^{-}}$, $\pi^{*}_{\lambda^{+}}$.
\end{thm}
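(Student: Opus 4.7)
The plan is to exploit the structural result that for every fixed $\lambda \geq 0$, the unconstrained problem in \eqref{opt:dualWithoutc} is a standard average-cost \ac{MDP} with finite state and action spaces, so by the usual Bellman/\ac{RVIA} machinery there exists a deterministic stationary optimal policy $\pi_\lambda^{*}$ attaining $\mathcal{L}^{*}(\lambda)$. First I would establish that, as $\lambda$ sweeps $[0,\infty)$, the induced average transmission cost $\bar{c}^{\pi_\lambda^{*}}$ is a non-increasing, right-continuous step function of $\lambda$ taking only finitely many values, since the set of deterministic stationary policies is finite. The intuition is clear: inflating the per-sample penalty $\lambda c$ can only push the optimal policy toward transmitting less often.

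Next I would use this monotonicity to locate the constraint $\bar{c}^{\pi} \leq c_\text{max}$. If some $\lambda$ yields $\bar{c}^{\pi_\lambda^{*}} = c_\text{max}$ exactly, then $\pi_\lambda^{*}$ is itself CMDP-optimal and the mixture is trivial. Otherwise, the step-function structure produces a jump point $\lambda^{*}$ with two adjacent minimizers $\pi_{\lambda^{-}}^{*}$ and $\pi_{\lambda^{+}}^{*}$ of \eqref{opt:dual} at $\lambda = \lambda^{*}$, satisfying
\begin{align*}
\bar{c}^{\pi_{\lambda^{-}}^{*}} \;>\; c_\text{max} \;>\; \bar{c}^{\pi_{\lambda^{+}}^{*}}.
\end{align*}
I would then define $\pi^{*}$ as the randomized policy that, once at $t=0$, commits to $\pi_{\lambda^{-}}^{*}$ with probability $\mu$ and to $\pi_{\lambda^{+}}^{*}$ with probability $1-\mu$, choosing
\begin{align*}
\mu \;=\; \frac{c_\text{max} - \bar{c}^{\pi_{\lambda^{+}}^{*}}}{\bar{c}^{\pi_{\lambda^{-}}^{*}} - \bar{c}^{\pi_{\lambda^{+}}^{*}}} \;\in\; (0,1),
\end{align*}
so that by linearity of the limiting time-average under a one-shot randomization the constraint is met with equality, $\bar{c}^{\pi^{*}} = c_\text{max}$.

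The final step is to argue global optimality of this mixture for \eqref{optproblem}. Because both $\pi_{\lambda^{-}}^{*}$ and $\pi_{\lambda^{+}}^{*}$ attain the common value $\mathcal{L}^{*}(\lambda^{*})$, linearity of $\mathcal{L}(\cdot,\lambda^{*})$ in the mixing coefficient gives $\mathcal{L}(\pi^{*},\lambda^{*}) = \mathcal{L}^{*}(\lambda^{*})$. Combined with $\bar{c}^{\pi^{*}} = c_\text{max}$, i.e., complementary slackness $\lambda^{*}(\bar{c}^{\pi^{*}} - c_\text{max}) = 0$, and with the strong-duality relation $\bar{C}_{\pi^{*}} = \sup_{\lambda\geq 0}\mathcal{L}^{*}(\lambda)$ already established via the growth condition, I obtain
\begin{align*}
\bar{C}^{\pi^{*}} \;=\; \mathcal{L}(\pi^{*},\lambda^{*}) + \lambda^{*}(c_\text{max} - \bar{c}^{\pi^{*}}) \;=\; \mathcal{L}^{*}(\lambda^{*}) \;\leq\; \sup_{\lambda\geq 0}\mathcal{L}^{*}(\lambda) \;=\; \bar{C}_{\pi^{*}},
\end{align*}
so $\pi^{*}$ attains the CMDP optimum.

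The main obstacle is the identification of the two deterministic policies at the jump point. One must verify that at the critical $\lambda^{*}$ the set of minimizers of \eqref{opt:dualWithoutc} actually contains two policies whose induced average transmission costs bracket $c_\text{max}$; this is where the ergodicity of $\{X_t\}$ and the finiteness of $\set{S}$ pay off, because they ensure that the occupation measures are well-defined and vary continuously in the mixing coefficient, and that Lagrangian values and constraint values transfer cleanly from deterministic policies to their convex combinations. Everything else reduces to standard \ac{CMDP} arguments, and indeed the construction is exactly the one formalized in \cite[Theorem 4.4]{beutler1985optimal} that the statement invokes.
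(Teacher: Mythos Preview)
The paper does not actually prove this theorem; it states it as a direct citation of \cite[Theorem~4.4]{beutler1985optimal} and immediately moves on to using it. Your proposal is a correct and essentially faithful reconstruction of the Beutler--Ross argument that the paper invokes: existence of deterministic $\lambda$-optimal policies for the finite average-cost \ac{MDP}, monotone step-function behaviour of $\bar{c}^{\pi_\lambda^{*}}$ in $\lambda$, bracketing of $c_\text{max}$ at a jump point $\lambda^{*}$, a one-shot randomization to meet the constraint with equality, and optimality via complementary slackness together with the strong-duality relation the paper records just above the theorem.

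Two small remarks. First, in your closing display you write $\mathcal{L}^{*}(\lambda^{*}) \leq \sup_{\lambda\geq 0}\mathcal{L}^{*}(\lambda)$; in fact your $\lambda^{*}$ \emph{is} the dual maximizer, because $\mathcal{L}^{*}(\cdot)$ is concave and the bracketing condition $\bar{c}^{\pi_{\lambda^{-}}^{*}} > c_\text{max} > \bar{c}^{\pi_{\lambda^{+}}^{*}}$ places $0$ in its superdifferential at $\lambda^{*}$, so that inequality is an equality. Second, to finish you still need the trivial reverse bound $\bar{C}^{\pi^{*}} \geq \bar{C}_{\pi^{*}}$ coming from feasibility of your mixture, which you leave implicit; stating it makes the equality $\bar{C}^{\pi^{*}} = \bar{C}_{\pi^{*}}$ explicit.
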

The optimal policy is written symbolically as 
	$\pi^{*} = \eta \pi^{*}_{\lambda^{-}} + (1-\eta)\pi^{*}_{\lambda^{+}}\text{,}$
where $\eta$ is a probabilistic factor.
We characterize $\eta$, $\lambda^{-}$, and $\lambda^{+}$, later in this paper.  

We now proceed to find the solution to the problem \eqref{opt:dual}. To obtain the optimal policy of an infinite horizon average cost \ac{MDP}, it is sufficient to solve the following Bellman equation \cite{BertsakasDPII}
\begin{align}\nonumber\label{BellmanBerts}
	&\theta_\lambda  + V(S_t)  =  \\ 
	&\min\limits_{\alpha_t \in \mathcal{A}}\left\{ C_{a_t}  + \lambda\alpha_t c+\sum\limits_{S_{t+1}\in\mathcal{S}} P_{S_t,S_{t+1}}  V(S_{t+1}) \right\}\text{,} 
\end{align}
where $\theta_\lambda$ is the optimal value of \eqref{opt:dualWithoutc}, for a given $\lambda>0$, and $V(S_{t+1})$ is the \textit{cost-to-go}  or \textit{value function}.
This is known to be a challenging problem \cite{BertsakasDPII}. We apply the value iteration algorithm and the bisection method to solve the problem and to find the optimal Lagrange multiplier, respectively. The detailed steps are provided in Algorithm $1$ \footnote{There is no closed form expression for $\eta$ \cite{ma1986estimation}, thus we numerically search for $\eta \in \left[0,1\right]$.}.
\begin{algorithm}\label{alg:cmdp}
	\nonumber
	\caption{Value Iteration Algorithm}
	Initialization: $\lambda\leftarrow 0$, $\lambda_{-} \leftarrow 0$, $\lambda_{+}$ large positive number, and $\epsilon>0$\\
	Run $\text{VI}(\lambda)$\\
	\eIf{$\bar{c}\leq c_\text{max}$}{
	$\pi^{*} \leftarrow \pi^{*}_\lambda$}
	{\While{$|\lambda_{+} - \lambda_{-}|>\epsilon$}{
		\text{Run} $\text{VI}(\frac{\lambda_{+}-\lambda_{-}}{2})$\\
		\eIf{$\bar{c}\geq c_\text{max}$}{
			$\lambda_{-} \leftarrow \lambda$}
			{$\lambda_+ \leftarrow \lambda$}	
	}
	 $\lambda^{*} \leftarrow \frac{\lambda_{+} + \lambda_{-}}{2}$, $\lambda^{*}_{+} \leftarrow \lambda_{-}$, $\lambda_{+}^{*}\leftarrow \lambda_{-}$\\
 	  $\text{VI}(\lambda^{*})$\\
 	  \eIf{$\bar{c}=c_\text{max}$}{$\pi^* = \pi^*_\lambda$}
   	   {$\pi^{*} = \eta\pi_{\lambda^{*}_{-}}   + (1-\eta)\pi_{\lambda^{*}_{+}} $   }
	}
	\algrule[1pt]
	\textbf{function} $\text{VI}(\lambda)$: 
	\\
	Initialization: $V^{0} = V^{1} = 0$, $\forall s \in \mathcal{S}$, choose a small $\epsilon>0$, set $n=1$\\
	\While{$||V^n - V^{n-1}||\geq \epsilon (1-\gamma)/2\gamma $}{
		\For{ each $s\in \mathcal{S}$ compute}{
		
		\scriptsize
		$V^{n}(s) = \min\limits_{\alpha_t} \left\{C_{\alpha_t} + \lambda\alpha_tc + \gamma\sum\limits_{S_{t+1}\in \mathcal{S}}  \Pr\left\{S_{t+1}|S_t,\alpha_t   \right\}V^{n-1}(S_{t+1})   \right\}$}
		$n\leftarrow n + 1$
	}
\textbf{return} policy $\pi$ 
\end{algorithm}

\subsection{Suboptimal low-complexity algorithm}
Although the value iteration algorithm is proven to converge to the optimal solution, it suffers from high computational complexity, known as the \emph{curse of dimensionality} \cite{powell2007approximate}. Our goal is to provide a low-complexity algorithm that guarantees that the average cost constraints are satisfied and which provides a solution close to the optimal one. Using tools from Lyapunov optimization theory, we provide a real-time algorithm named \ac{DPP}. We reformulate the problem in \eqref{optproblem}, and we define the objective function $g(t)$ as
\begin{align}\label{eq:objectivelyapunov}\nonumber
	&g(t) = \\&
	\begin{cases}
			\left( \sum\limits_{k=i}^{N} C_{k,j} P_{i,k}  (1-p_s) + \sum\limits_{k=i}^N C_{k,i}P_{i,k} p_s\right) \text{, if } \alpha_t = 1\text{,} \vspace{3mm}\\
			\left( \sum\limits_{k=i}^N C_{k,j} \mathcal{P}_{i,k} \right)\text{, otherwise.}
	\end{cases}
\end{align}
The expected time average of the objective function is defined as 
$
	\bar{g} \triangleq \lim \sup \limits_{T\rightarrow \infty} \frac{1}{T} \sum\limits_{t=1}^T \mathbb{E} \left\{g(t)\right\}\text{.}
$
The reformulated stochastic optimization problem is the following
\begin{equation}
	\label{optproblemLyapuno}
	\min\limits_{\alpha_t} \quad \bar{g}\text{, } \text{ }\text{ s.~t., }   \bar{c}\leq c_\text{max}\text{.} 
\end{equation}
In order to satisfy the average cost constraints, we map the average cost constraint in eqref{optproblemLyapuno} into a virtual queue \cite{NeelyBook}. We show below that the time average cost problem is transformed into a queue stability problem. 

Let $\left\{Z(t)\right\}$ be the virtual queue associated with constraint $\eqref{optproblemLyapuno}$. The virtual queue is updated at every time slot $t$ as
\begin{align}\label{virtualqevolution}
	Z(t+1) = \max[Z(t)-c,0] + \alpha_t c\text{.} 
\end{align}
Process $\left\{Z(t)\right\}$ can be viewed as a virtual queue with arrivals $\alpha_t$ and service rate $c$. This idea is based on the fundamental \textit{Lyapunov drift} theorem \cite{meyn2012markov}. 

%
With the above definitions in mind, we can now proceed to describe our proposed algorithm and provide performance guarantees regarding the average cost constraint.
\begin{lemma}\label{lem:ratestable} If $Z(t)$ is rate stable\footnote{A discrete time process $Q(t)$ is \textit{rate stable} if
				$\lim\limits_{t\rightarrow \infty} \frac{Q(t)}{t} = 0\text{, with probability 1.}$}, then the constraint in \eqref{optproblemLyapuno} is satisfied.
\end{lemma}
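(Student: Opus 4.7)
The plan is to follow the standard virtual queue / Lyapunov drift argument due to Neely. First I would peel off the non-negativity operator in the queue recursion \eqref{virtualqevolution} to obtain the one-sided bound
\begin{equation*}
Z(t+1) \;\geq\; Z(t) \;+\; \alpha_t c \;-\; c_{\text{max}},
\end{equation*}
where the ``service'' term is read as the budget $c_{\text{max}}$ appearing in the constraint of \eqref{optproblemLyapuno} (the $c$ inside the $\max$ in \eqref{virtualqevolution} is interpreted as that bound, matching the description of $Z(t)$ as a virtual queue with service rate equal to the constraint level). Telescoping this inequality from $t=0$ to $t=T-1$ and rearranging yields
\begin{equation*}
\frac{1}{T}\sum_{t=0}^{T-1} \alpha_t c \;\leq\; c_{\text{max}} \;+\; \frac{Z(T) - Z(0)}{T}.
\end{equation*}

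Next I would take $T\to\infty$. Since $Z(0)$ is finite, the term $Z(0)/T$ vanishes, and by the rate stability hypothesis we have $\lim_{T\to\infty} Z(T)/T = 0$ with probability one, so the right-hand side converges to $c_{\text{max}}$ almost surely. This delivers the sample-path inequality $\limsup_{T\to\infty} \tfrac{1}{T}\sum_{t=0}^{T-1} \alpha_t c \leq c_{\text{max}}$ w.p.\ 1.

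Finally, to align with the expected-time-average form of $\bar{c}$ used in \eqref{optproblemLyapuno}, I would take expectations on both sides of the telescoped inequality before passing to the limit, using that $\alpha_t c$ is uniformly bounded by $c$ (so Fatou/dominated convergence applies cleanly) and that $\mathbb{E}\{Z(T)\}/T \to 0$ follows from rate stability of $\{Z(t)\}$ together with its bounded increments. Combining these yields $\bar{c} \leq c_{\text{max}}$, which is precisely the constraint in \eqref{optproblemLyapuno}.

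The argument is essentially mechanical; the only delicate point is the transfer from the sample-path statement of rate stability (as defined in the paper's footnote) to the expected-value form of $\bar{c}$. The uniform boundedness $0 \leq \alpha_t c \leq c$ together with the bounded-increment structure of the virtual queue makes this step standard, so I expect no real obstruction beyond spelling out that technicality.
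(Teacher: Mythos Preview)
Your argument is essentially the paper's own proof: the paper invokes Neely's ``basic sample property'' to obtain exactly your telescoped inequality (written there as $\tfrac{Z(t)}{t}-\tfrac{Z(0)}{t}\ge \tfrac{1}{t}\sum c\alpha_\tau - \tfrac{1}{t}\sum c$), then uses rate stability and takes time-average expectations. You are in fact more explicit than the paper on two points it glosses over --- the reading of the service term in \eqref{virtualqevolution} as the budget $c_{\text{max}}$, and the passage from the almost-sure sample-path bound to the expected time average --- but the route is the same.
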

\begin{proof}
	By using the basic sample property \cite{NeelyBook}[Lemma 2.1, Chapter 2], we obtain:
	\begin{align}\label{ineq:sampleproperty}
		\frac{Z(t)}{t} - \frac{Z(0)}{t} \geq \frac{1}{t}\sum\limits_{\tau = 0}^{t-1} c \alpha_t - \frac{1}{t} \sum\limits_{\tau = 0}^{t-1} c\text{.}
	\end{align}
If $Z(t)$ is rate stable, then $\lim\limits_{t\rightarrow \infty} \frac{Z(t)}{t} = 0$. By taking the time average expectations in $\eqref{ineq:sampleproperty}$ on both sides, we obtain the result. 
\end{proof}
In order to stabilize virtual queue $Z(t)$, and therefore by Lemma \ref{lem:ratestable} to satisfy the average cost constraints, we first define the Lyapunov function as 
	$L (Z(t)) \triangleq \frac{1}{2} Z^2(t)$
and the Lyapunov drift as
\begin{align}\label{eq:drift}
	\Delta(Z(t)) \triangleq \mathbb{E} \left\{L(Z(t+1)) - L(Z(t)) | Z(t)\right\}\text{.}
\end{align}
The above conditional expectation is with respect to the random source state transitions, channel states, and transmission decisions. We apply the \ac{DPP} algorithm to minimize the time average expected cost while stabilizing the virtual queues, $Z(t)$. Specifically, this approach seeks to minimize an upper bound on the following expression
\begin{align}\label{exprDriftPenalty}
	\Delta (Z(t)) + W\mathbb{E}\left\{g(t)\right\}\text{,}
\end{align}
where $W>0$ is an importance factor to scale the penalty. By utilizing $(\max\left[Q-b,0\right] + A)^2 \leq Q^2 + A^2 +b^2 + 2Q(A-b)$, we get the following upper bound on the expression in \eqref{exprDriftPenalty}
\begin{align}\nonumber
	&\Delta(Z(t)) + W\mathbb{E}\left\{g(t)\right\} \\ 
	&\leq B + W\mathbb{E}\left\{g(t)\right\} + \mathbb{E}\left\{Z(t)(c\alpha_t - c_\text{max})\right\}\text{,}
\end{align}
where $B < \infty$, and $B\geq \frac{(\alpha_t c_\text{max})^2 + c_\text{max}^2  }{2}.$

\subsection{Drift-Plus Penalty Algorithm}
At every time slot $t$, the transmitter observes the state of the source $(X_t)$ and the estimated state at the destination $(\bar{X}_t)$, and it takes a decision according to the following optimization problem
	\begin{align}\label{driftminimization}
		\min\limits_{\alpha_t} \quad & Wg(t) + Z(t)(c\alpha_t - c_\text{max})\text{.}
	\end{align}
\begin{lemma}\label{lem:boundness}
We consider a class of stationary policies, possibly randomized, denoted by $\Omega$. A policy $\omega(t)$ that belongs to the class $\Omega$ is an i.i.d. process that takes probabilistic decisions independent of the state of the system, at every time slot $t$.
Let $y(t) = c\alpha_t - c_\text{max}$, and $c(t) = c\alpha_t$. Then, if the problem in \eqref{optproblemLyapuno} is strictly feasible, and the second moments of $y(t)$ and $g(t)$ are bounded, then there is $\epsilon > 0$ for which there is an $\omega(t)$ policy such that the following holds
	\begin{align}\nonumber
		\mathbb{E} \left\{y(t)\right\} \leq \epsilon\text{, }
		\mathbb{E} \left\{g^{*}(t) \right\}  = g_\omega \leq g^\text{opt} + \epsilon\text{,}
	\end{align}
where $y^{*}(t)$ and $g^{*}(t)$ are the resulting values of the $\omega$ policy, and $g^{\text{opt}}$ is the optimal value function in \eqref{optproblemLyapuno} achievable by any optimal stationary randomized policy. 
\end{lemma}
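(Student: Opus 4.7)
The plan is to follow the standard ``$\omega$-only'' existence argument from Neely's stochastic network optimization framework \cite[Theorem 4.5]{NeelyBook}, adapted to our setting, and then to produce the required $\epsilon$-slack via a Slater-type mixture with a strictly feasible reference policy.

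First, I would show that there exists a policy $\omega_0 \in \Omega$ that attains the optimum, namely $\mathbb{E}\{g(t)\}_{\omega_0} = g^{\text{opt}}$ together with $\mathbb{E}\{y(t)\}_{\omega_0} \leq 0$. Since the action set $\mathcal{A} = \{0,1\}$ and the system state space are finite, the source chain $\{X_t\}$ is ergodic, and channel realizations are i.i.d., any $\omega$-only policy induces an ergodic joint chain on $(X_t, \hat{X}_t)$, and the set of achievable long-run averages $\{(\mathbb{E}\{g(t)\}_\omega, \mathbb{E}\{y(t)\}_\omega): \omega \in \Omega\}$ is a closed, bounded, convex subset of $\mathbb{R}^2$. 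Feasibility of \eqref{optproblemLyapuno} means this set meets the half-plane $\{y \leq 0\}$, and compactness delivers an attaining $\omega_0$; this is essentially the content of \cite[Theorem 12.7]{altman1999constrained} applied to our finite \ac{CMDP}.

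Next, I would exploit strict feasibility to obtain a second policy $\omega_1 \in \Omega$ with $\mathbb{E}\{y(t)\}_{\omega_1} \leq -\delta$ for some $\delta > 0$; the degenerate ``never transmit'' policy already works, since it yields $y(t) = -c_\text{max} < 0$. Then for $\rho \in [0,1)$, I would consider the mixed $\omega$-only policy $\omega_\rho$ that, independently at each slot, uses $\omega_0$ with probability $\rho$ and $\omega_1$ with probability $1-\rho$. This mixture remains in $\Omega$, and by linearity of expectation,
\[
\mathbb{E}\{y(t)\}_{\omega_\rho} \leq -(1-\rho)\delta, \quad \mathbb{E}\{g(t)\}_{\omega_\rho} \leq g^{\text{opt}} + (1-\rho)M,
\]
where $M < \infty$ is any bound on $|\mathbb{E}\{g(t)\}_{\omega_1} - g^{\text{opt}}|$; such an $M$ exists by the assumed boundedness of the second moment of $g(t)$. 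Picking $1-\rho \leq \min\{\epsilon/M,\, \epsilon/\delta,\, 1\}$ yields both claimed inequalities simultaneously.

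The main obstacle is the first step, namely justifying that the infimum in \eqref{optproblemLyapuno} is attained within the restricted class $\Omega$ of state-independent i.i.d.\ policies. Because $(X_t, \hat{X}_t)$ is Markov rather than i.i.d., one must verify that the long-run averages under an $\omega$-only policy are linear functions of the randomization probabilities of $\omega$, which in turn requires the source chain's ergodicity and the finiteness of $\mathcal{S}$ -- exactly the hypotheses already collected. The subsequent mixing argument is essentially a convex-combination computation and I would treat it as routine.
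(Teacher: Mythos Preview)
Your argument is sound and is, in effect, a reconstruction of \cite[Theorem~4.5]{NeelyBook} rather than an appeal to it. The paper's own proof takes the latter route: it simply notes that $c(t)=c\alpha_t$ and $g(t)$ are bounded (since the action set, the transmission cost $c$, and the entries of the actuation-cost matrix $C$ are all finite), hence have bounded second moments, so the boundedness assumptions of \cite[Ch.~4.2.1]{NeelyBook} hold and \cite[Theorem~4.5]{NeelyBook} delivers the conclusion directly. Your compactness/convexity description of the achievable region and the Slater-type mixture with the ``never transmit'' policy are precisely the mechanism behind Neely's theorem, so the two proofs agree in substance; yours is just unpacked. One point in your favor: you explicitly flag the Markov-versus-i.i.d.\ issue (the source $\{X_t\}$ is not i.i.d., so Neely's standard hypotheses are not literally met, and it is not automatic that a state-independent $\omega$-only policy can approach $g^{\text{opt}}$), whereas the paper's citation passes over this silently.
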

\begin{proof}
Since the cost for sampling and transmission is bounded, the second moment of $c(t)$ is also bounded. Furthermore, since the values of the matrix $C$ are bounded, the second moment of $g(t)$ is also bounded. Therefore, we have
	\begin{align}\nonumber
		\mathbb{E}\left\{c(t)^{2}\right\}\leq c^2\text{, }
		\mathbb{E}\left\{g(t)^2\right\}  \leq C_\text{max}^2\text{,}
	\end{align}
where $C_\text{max}$ is the maximum value of instantaneous actuation cost. Then, the boundedness assumptions in \cite{NeelyBook}[Ch. 4.2.1] are satisfied. Therefore, from Theorem 4.5 in \cite{NeelyBook}, we get the result. 
\end{proof}
\begin{thm}
The \ac{DPP} algorithm satisfies any feasible set of average cost constraints.
\end{thm}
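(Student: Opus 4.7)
The plan is to leverage the drift-plus-penalty upper bound already established, combine it with the existence of a benchmark stationary randomized policy from Lemma~\ref{lem:boundness}, and reduce the feasibility claim to rate stability of $Z(t)$, which by Lemma~\ref{lem:ratestable} delivers the average cost constraint.

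First I would exploit the defining property of the proposed rule: because the DPP algorithm chooses $\alpha_t$ to minimize $W g(t) + Z(t)(c\alpha_t - c_\text{max})$ at each slot over all admissible decisions, plugging any alternative policy (in particular the $\omega$-policy from Lemma~\ref{lem:boundness}) into the drift-plus-penalty upper bound can only increase the right-hand side. This yields, for every $t$,
\begin{equation}\nonumber
\Delta(Z(t)) + W\,\mathbb{E}\{g(t)\mid Z(t)\} \;\leq\; B + W\,\mathbb{E}\{g^{\omega}(t)\} + Z(t)\,\mathbb{E}\{y^{\omega}(t)\}.
\end{equation}
Invoking feasibility of \eqref{optproblemLyapuno}, Lemma~\ref{lem:boundness} guarantees an $\omega$-policy whose decisions are independent of the queue state and satisfy $\mathbb{E}\{y^{\omega}(t)\}\leq \epsilon$ and $\mathbb{E}\{g^{\omega}(t)\}\leq g^{\text{opt}} + \epsilon$ for arbitrarily small $\epsilon>0$; by strict feasibility one may in fact take $\epsilon \to 0^{-}$, so that the coefficient multiplying $Z(t)$ becomes non-positive (or the bound may be pushed to $-\epsilon Z(t)$ under strict feasibility).

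Next I would take total expectation, telescope the drift over $t=0,\dots,T-1$, and use $L(Z(t))\geq 0$ together with $L(Z(0))=0$ to obtain
\begin{equation}\nonumber
\mathbb{E}\{L(Z(T))\} \;\leq\; B\,T + W\sum_{t=0}^{T-1}\bigl(\mathbb{E}\{g^{\omega}(t)\} - \mathbb{E}\{g(t)\}\bigr),
\end{equation}
and since $g(t)$ is uniformly bounded by $C_\text{max}$ (as argued in the proof of Lemma~\ref{lem:boundness}), the right-hand side grows at most linearly in $T$. Therefore $\mathbb{E}\{Z^{2}(T)\} \leq 2 B\,T + 2W\,C_\text{max} T$, from which $\mathbb{E}\{Z(T)\}/T \to 0$ by Jensen's inequality, i.e., $Z(t)$ is mean rate stable and hence rate stable in the sense of Lemma~\ref{lem:ratestable}.

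Finally I would close the loop by invoking Lemma~\ref{lem:ratestable}: rate stability of the virtual queue $Z(t)$ associated with the average cost constraint implies $\bar{c} \leq c_\text{max}$, which is precisely the claim. The main obstacle I expect is the bookkeeping around the sign of $\mathbb{E}\{y^{\omega}(t)\}$ and the distinction between mean rate stability (which falls out of the $\sqrt{T}$ bound on $\mathbb{E}\{Z(T)\}$) and the almost-sure rate stability invoked in Lemma~\ref{lem:ratestable}; bridging these requires either an application of the bounded differences / strong stability results in \cite{NeelyBook} or restricting the claim to mean rate stability, which is already sufficient for the time-averaged constraint after taking expectations in \eqref{ineq:sampleproperty}.
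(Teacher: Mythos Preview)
Your proposal is correct and follows essentially the same route as the paper: compare the DPP drift-plus-penalty bound against the $\omega$-policy from Lemma~\ref{lem:boundness}, let $\epsilon\to 0$, and conclude mean rate stability of $Z(t)$. The only cosmetic difference is that the paper stops once the inequality $\Delta(Z(t)) + W\mathbb{E}\{g(t)\mid S_t\}\le B + Wg^{\text{opt}}$ is reached and invokes \cite[Theorem~4.2]{NeelyBook} directly, whereas you unpack that reference by telescoping and applying Jensen; your remark about the gap between mean rate stability and the almost-sure rate stability hypothesized in Lemma~\ref{lem:ratestable} is apt and is a point the paper leaves implicit.
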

\begin{proof}
Since the \ac{DPP} algorithm seeks to minimize the expression in \eqref{driftminimization}, we obtain that 
	\begin{align}
		&\Delta(Z(t)) + W\mathbb{E}\left\{g(t)|S_t\right\} \\
		&\leq B + Z(t)\mathbb{E}\left\{y_\text{DPP}(t)\right\} + W\mathbb{E}\left\{g_\text{DPP}(t)\right\} \\ 
		& \leq B + Z(t)\mathbb{E}\left\{ y^{*}(t) \right\} + W\mathbb{E}\left\{g^{*}(t)\right\}\text{,}
	\end{align}
where $y^{*}(t)$ and $g^{*}(t)$ are the resulting values after applying policy $\omega$. By considering the bound in Lemma \ref{lem:boundness}, we get
\begin{align}
	\Delta(Z(t))  + W\mathbb{E}\left\{g(t)|S_t\right\} \leq B + \epsilon Z(t) + W(g^{\text{opt}} + \epsilon) \text{,}
\end{align}
and taking $\epsilon \rightarrow 0$, we have 
\begin{align}
	\Delta(Z(t)) + W\mathbb{E}\left\{g(t)|S_t\right\} \leq B + Wg^{\text{opt}}\text{.}
\end{align} 
The above expression is in the exact form of the Lyapunov optimization theory \cite{NeelyBook}[Theorem 4.2]. Therefore, the virtual queue is mean rate stable, and the average constraints are satisfied.
\end{proof}

\section{Simulation Results}
In this section, we compare the performance of the two proposed algorithms with the baseline algorithm proposed in \cite{pappas2021goal} in terms of average real-time reconstruction error and cost of actuation error. The baseline policy decides on sampling and transmission whenever there is a discrepancy between the states at the source and at the destination, i.e., $X_t\neq \bar{X}_t$. In our setup, the average cost constraint, $c_\text{max}$, is set to $0.2$, with a cost of sampling and transmission equal to $1$. Therefore, a feasible policy decides $20\%$ of the time for sampling and transmission. Note that the baseline algorithm does not take into account the sampling and transmission costs.

We consider two cases for the source dynamics: a slowly varying source and a fast varying source. We consider that the Markov source has four states and is modeled as a birth-death process. The cost of the actuation error matrix remains fixed in both cases. The values of the matrix are shown below
\begin{align}
    C =\left(\begin{array}{l|cccc}
         & 0 & 1 & 2 & 3\\
        \hline
       0 & 0 & 10 & 50 & 30\\
       1 & 10 & 0 & 40 & 20\\
       2 & 20 & 10 & 0 & 10\\
       3 & 30 & 20 & 40 & 0
    \end{array}\right)\text{,}
\end{align}
where element $C_{i,j}$ is the cost of actuation error for the source being in state $i$ while the estimated value, $\hat{X}$, is $j$. 

\subsection{Slowly varying source}
In Fig. \ref{figs:slowvarying}, we compare the average reconstruction error and the average cost of actuation error resulting from the three algorithms for the case of a slowly varying source. The transition of the Markov source is shown below:
\begin{align}
    P = \left[ \begin{matrix}
        0.8 & 0.2 &  0  &  0\\
        0.1 & 0.8 & 0.1 &  0\\
         0  & 0.1 & 0.8 & 0.1\\
         0  &  0  & 0.2 & 0.8
    \end{matrix}\right].
\end{align}

\begin{figure}%
	\centering
	\subfloat[\centering]{{\includegraphics[width=3.9cm]{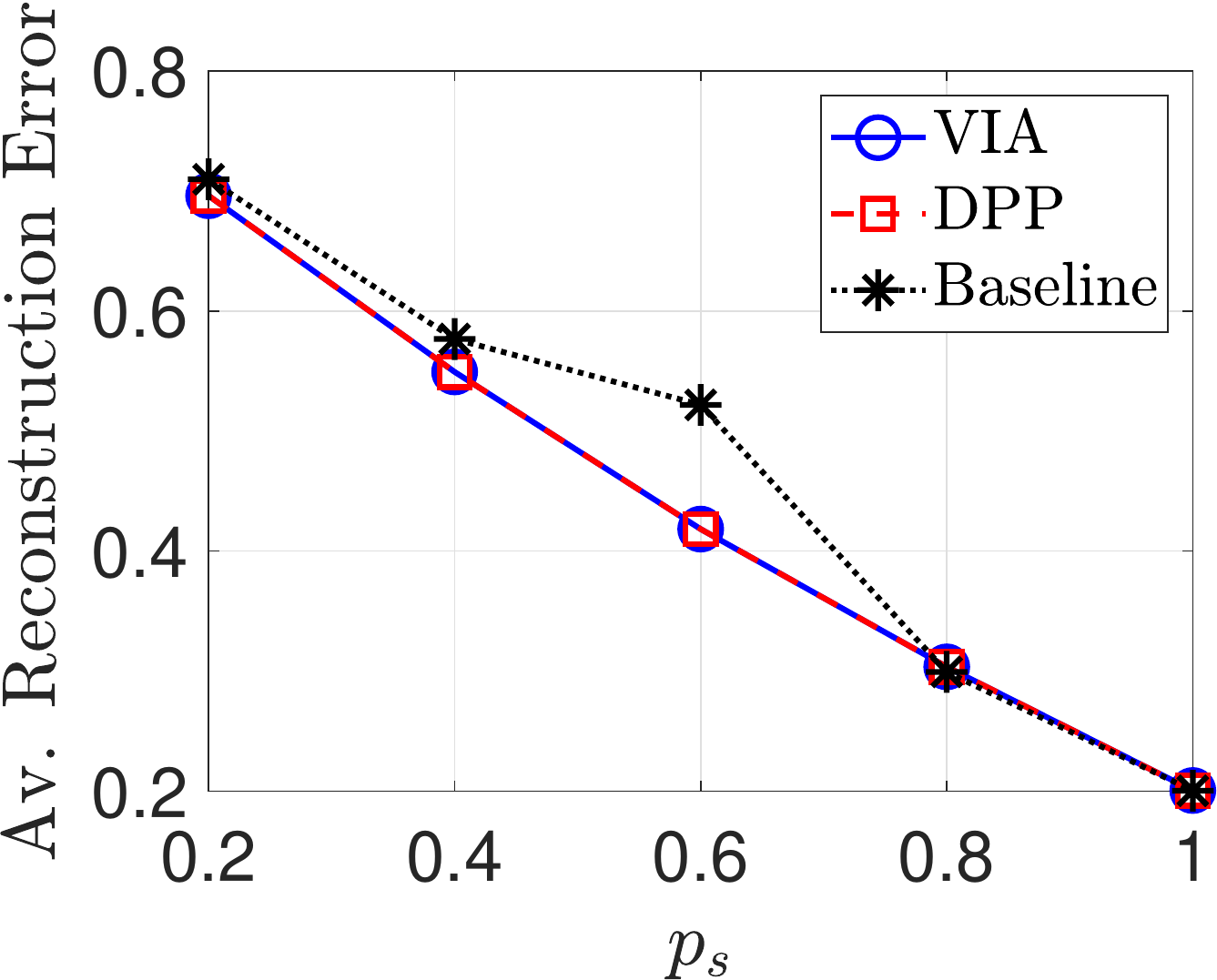} }} \label{fig:recerrorslow}%
	\qquad
	\subfloat[\centering]{{\includegraphics[width=3.9cm]{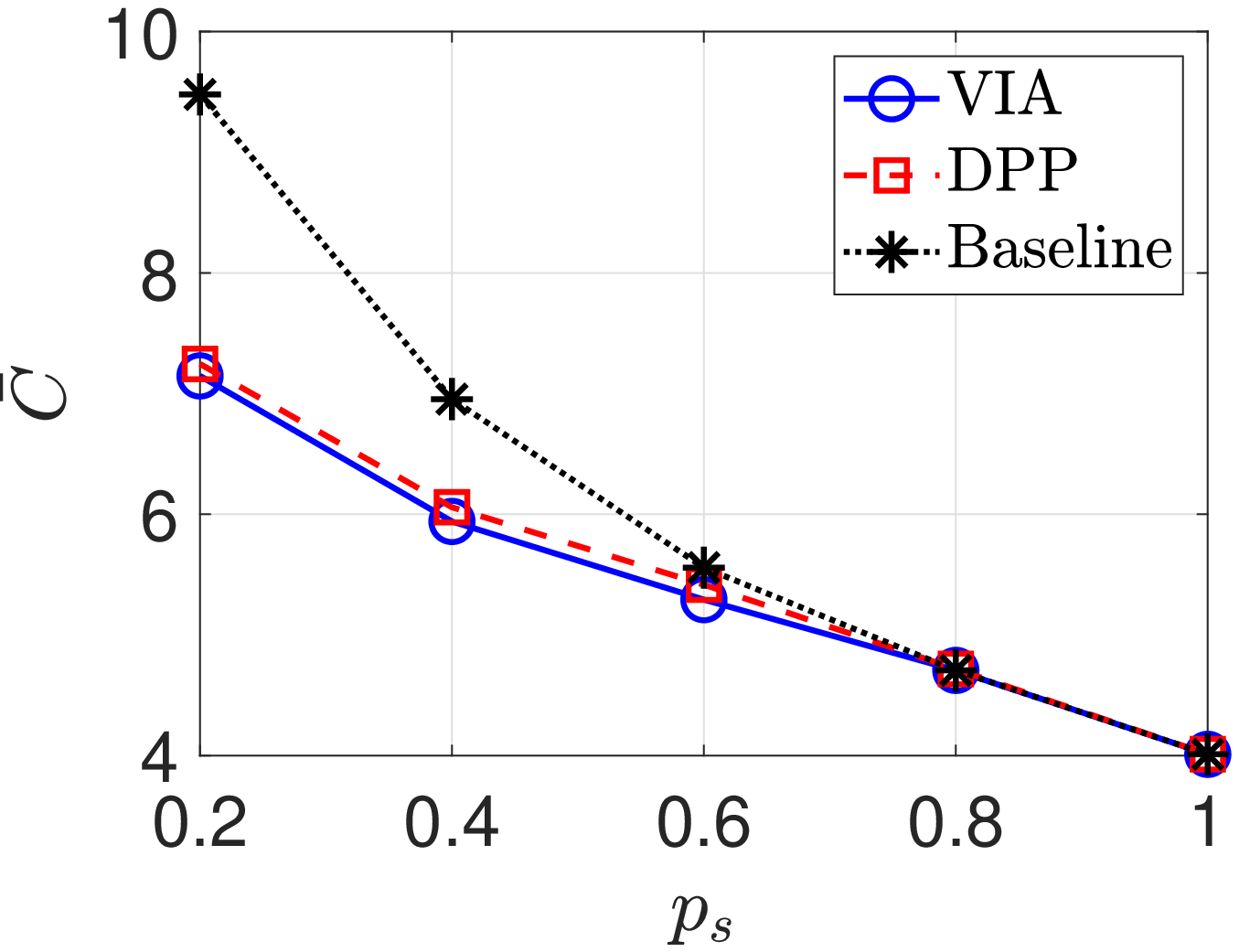} }}\label{fig:costactslow}%
	\caption{Performance of proposed policies as a function of the success probability for a slowly varying source.}%
	\label{figs:slowvarying}
\end{figure}

We consider that the source remains in the same state with a high probability $(0.8)$. In Fig. \ref{figs:slowvarying}a, we observe that the proposed \ac{DPP} and VIA algorithms have very similar performance and provide lower average reconstruction error than the baseline algorithm. Moreover, in Fig. \ref{figs:slowvarying}b, we see that the difference between the cost of actuation error performance of the baseline and the proposed algorithms increases. This is because the baseline algorithm decides on sampling and transmission whenever there is a discrepancy between the source and the destination, without explicitly taking into account the cost of actuation error.

\subsection{Rapidly varying source}
In the case of a rapidly varying source, we consider that the Markov source remains in the same state with probability $0.2$. The corresponding transition matrix is 
\begin{align}
    P = \left[ \begin{matrix}
        0.2 & 0.8 &  0  &  0\\
        0.4 & 0.2 & 0.4 &  0\\
         0  & 0.4 & 0.2 & 0.4\\
         0  &  0  & 0.8 & 0.2
    \end{matrix}\right].
\end{align}
In Figs. \ref{figs:fastvarying}a and \ref{figs:fastvarying}b, we provide results for the average reconstruction error and average cost of actuation error, respectively. We observe that the performance of the baseline algorithm is better than that of VIA and DPP as far as the reconstruction error is concerned. However, the proposed algorithms still provide superior performance in terms of cost of actuation error, which is the metric of interest in this paper. The reason is that the algorithms proposed here take into account both the cost of actuation error and the statistics of the Markov source. Therefore, a main observation from the results is that \emph{low reconstruction error does not necessarily imply a low average cost of actuation error}. The reason is that these are two different performance metrics, and in a remote monitoring system with delayed measurements, it is crucial to take into account the statistics of the source and how the states of the source vary with the time horizon. 

\begin{figure}%
	\centering
	\subfloat[\centering]{{\includegraphics[width=3.9cm]{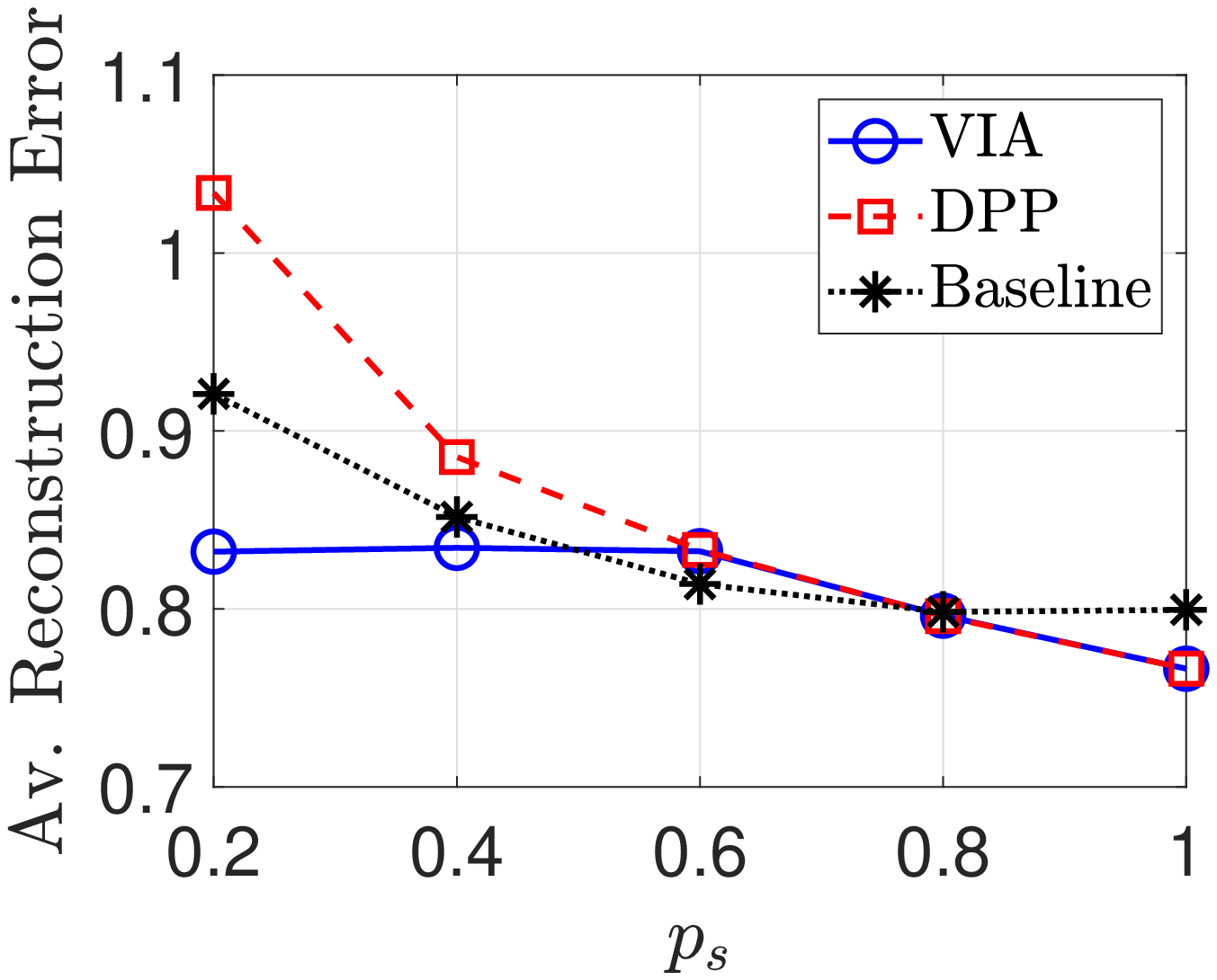}}} \label{fig:recerrorfast}%
	\qquad
	\subfloat[\centering]{{\includegraphics[width=3.9cm]{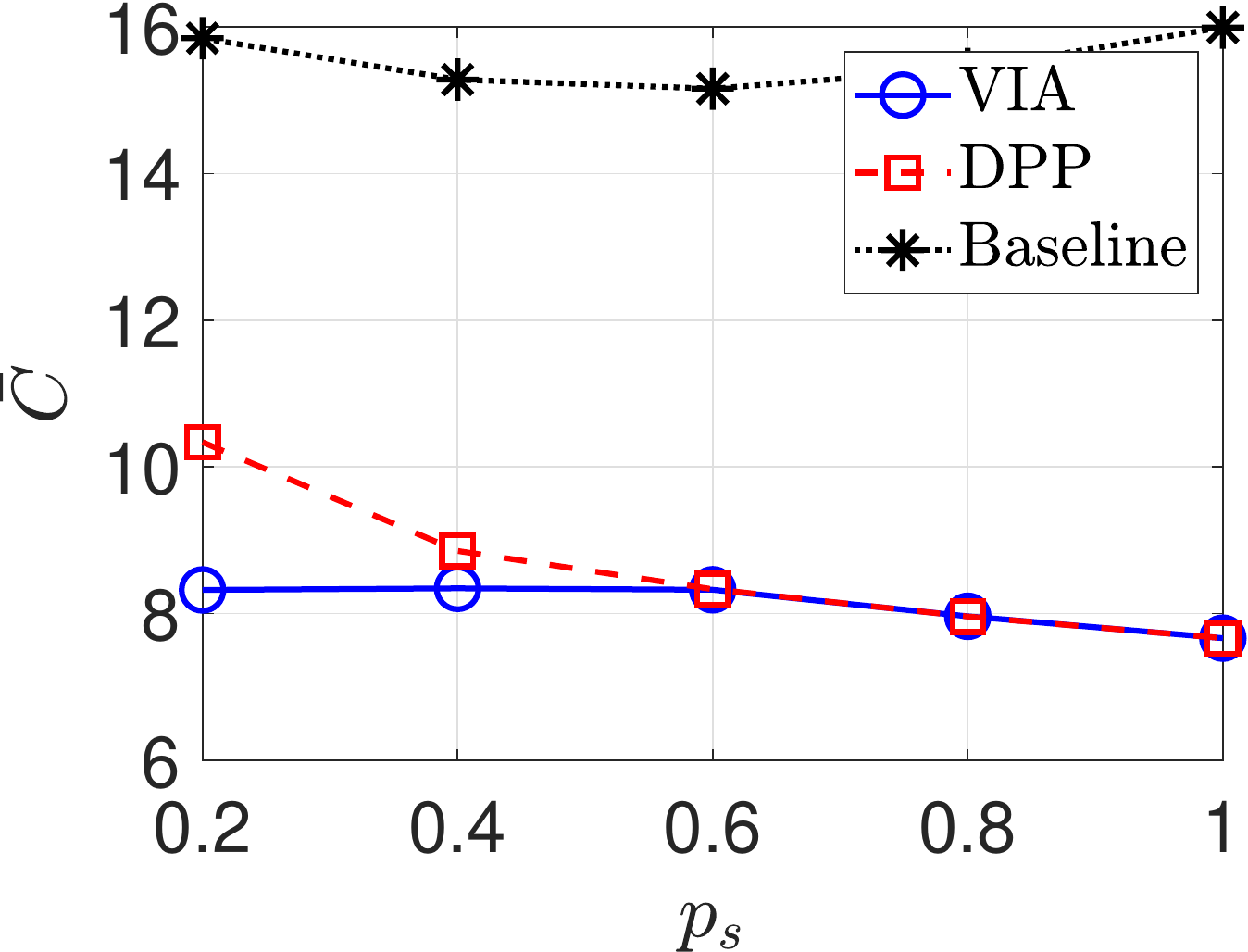} }}\label{fig:costactsfast}%
    \caption{Performance of proposed policies as a function of the success probability for a rapidly varying source.}
    \label{figs:fastvarying}
\end{figure}

\section{Conclusion}
In this work, we studied the minimization of the actuation error under resource constraints for real-time tracking of a remote source over wireless. We provided an optimal solution to the optimization problem and a low-complexity algorithm that guarantees the satisfaction of the average cost constraints. Our simulation results show that the performance of the low-complexity algorithm is close to optimal. We observed that an optimal policy for this problem takes into account not only the discrepancy in state between the source and the destination, but also the cost of actuation error that occurs due to this discrepancy, as well as the statistics of the source. Depending on the setup, it is sometimes beneficial to remain silent rather than perform sampling and transmitting a status update to the destination when there are delayed measurements sent to the actuator.


\bibliography{MyBib}
\bibliographystyle{ieeetr}
		
\end{document}